\documentclass[12pt]{article}

\usepackage[T1]{fontenc}
\usepackage[latin1]{inputenc}
\usepackage[english]{babel}
\usepackage{amsmath,amssymb,amsfonts,latexsym,graphicx,amsthm}

\theoremstyle{plain}
\newtheorem{theorem}{Theorem}
\newtheorem{lemma}[theorem]{Lemma}
\newtheorem{proposition}[theorem]{Proposition}
\newtheorem{corollary}[theorem]{Corollary}
\newtheorem{remark}[theorem]{Remark}
\theoremstyle{definition}
\newtheorem{definition}[theorem]{Definition}
\newtheorem{example}{Example}
\newtheorem{open}{Open Problem}
\usepackage{newalg}

\makeatletter
\newcommand{\Rmnum}[1]{\expandafter\@slowromancap\romannumeral #1@}
\makeatother

\begin{document}
\title{Approximation for Maximum Surjective Constraint Satisfaction Problems}
\author{Walter Bach\\
CNRS/LIX, \'Ecole Polytechnique, France\\
\texttt{bach@lix.polytechnique.fr}
\and
Hang Zhou \\
D\'epartement d'Informatique, \'Ecole Normale Sup\'erieure, France\\
\texttt{hang.zhou@ens.fr}}
\date{September 21, 2011}
\maketitle

\begin{abstract}
Maximum surjective constraint satisfaction problems (\textsc{Max-Sur-CSP}s) are computational problems where we are given a set of variables denoting values from a finite domain $B$ and a set of constraints on the variables. A solution to such a problem is a \emph{surjective} mapping from the set of variables to $B$ such that the number of satisfied constraints is \emph{maximized}. We study the approximation performance that can be achieved by algorithms for these problems, mainly by investigating their relation with \textsc{Max-CSP}s (which are the corresponding problems without the surjectivity requirement). Our work gives a complexity dichotomy for \textsc{Max-Sur-CSP}$(\mathcal{B})$ between PTAS and APX-complete, under the assumption that there is a complexity dichotomy for \textsc{Max-CSP}$(\mathcal{B})$ between PO and APX-complete, which has already been proved on the Boolean domain and 3-element domains.
\end{abstract}

\section{Introduction}

The constraint satisfaction problem (CSP) is an important computational problem in NP where the task is to decide whether there exists a homomorphism from a given input relational structure $\mathcal{A}$ to a fixed template relational structure $\mathcal{B}$, denoted by CSP($\mathcal{B}$). This problem has many applications in various fields of computer science, such as artificial intelligence and database theory. It is conjectured that CSP($\mathcal{B}$) for any finite structure $\mathcal{B}$ is either in P or NP-complete \cite{Feder1993}. This conjecture has been proved for the Boolean domain \cite{schaefer1978complexity} and 3-element domains \cite{bulatov2006dichotomy}.

The optimization version of CSP($\mathcal{B}$), denoted by \textsc{Max-CSP}($\mathcal{B}$), consists in finding a function from $\mathcal{A}$ to $\mathcal{B}$ satisfying the maximum number of constraints in $\mathcal{A}$ (see, e.g., \cite{creignou2001complexity}). It is conjectured that for any finite $\mathcal{B}$, \textsc{Max-CSP}($\mathcal{B}$) is either in PO or APX-complete \cite{creignou2001complexity}. This conjecture has been proved for the Boolean domain \cite{creignou2001complexity} and 3-element domains \cite{jonsson2006approximability}.

The surjective version of CSP($\mathcal{B}$), denoted by \textsc{Sur-CSP}($\mathcal{B}$), consists in finding a homomorphism using all available values in $\mathcal{B}$ (see, e.g., \cite{survey}). Some problems of the form \textsc{Sur-CSP}($\mathcal{B}$) have been proved to be NP-hard, such as \textsc{Sur-CSP}($C_4^{\text{ref}}$) \cite{paulusma2011computational}; some have unknown computational complexity, such as \textsc{Sur-CSP}($C_6$) \cite{feder1999list} and the No-Rainbow-Coloring problem \cite{bodirsky2004constraint}.

In this article we are interested in the combination of the above two branches of research, that is, the \emph{maximum surjective constraint satisfaction problem} for $\mathcal{B}$, denoted by \textsc{Max-Sur-CSP}($\mathcal{B}$), which consists in finding a surjective function from $\mathcal{A}$ to $\mathcal{B}$ satisfying the maximum number of constraints in $\mathcal{A}$. There are many natural computational problems of the form \textsc{Max-Sur-CSP}($\mathcal{B}$). For example, given a graph with red and blue edges, we might ask for a non-trivial partition of the vertices into two sets such that the sum of the number of red edges between the two sets and the number of blue edges inside the first set is minimized.
This corresponds to the \textsc{Max-Sur-CSP}($\mathcal{B}$) where $\mathcal{B}$ is the relational structure on the Boolean domain with two binary relations: equality and disjunction. We will see this problem later, which is known the \emph{Minimum-Asymmetric-Cut problem}.

It is easy to see (see Proposition~\ref{APX}) that all \textsc{Max-Sur-CSP}s are in the complexity
class APX. Our main result is a complexity dichotomy for \textsc{Max-Sur-CSP}($\mathcal{B}$) on the Boolean domain and 3-element domains: every such problem is either APX-complete or has a \emph{PTAS (Polynomial-Time Approximation Scheme)}. Interestingly, unlike \textsc{Max-CSP}($\mathcal{B}$), there are finite structures $\mathcal{B}$ such that \textsc{Max-Sur-CSP}($\mathcal{B}$) is NP-hard but has a PTAS.

The article is organized as follows. In Section 2, we formally introduce \textsc{Max-Sur-CSP}s. In Section 3, we show that  \textsc{Max-Sur-CSP}($\mathcal{B}$) is in APX for any finite $\mathcal{B}$, by providing a constant-ratio approximation algorithm. In Section 4 and 5, we compare the approximation ratio of \textsc{Max-Sur-CSP}($\mathcal{B}$) with that of \textsc{Max-CSP}($\mathcal{B}$). This comparison leads to the inapproximability or approximability for \textsc{Max-Sur-CSP}($\mathcal{B}$), which depends on the desired approximation ratio. In Section 6, we conclude that if there is a complexity dichotomy for \textsc{Max-CSP}$(\mathcal{B})$ between PO and APX-complete, there is also a complexity dichotomy for \textsc{Max-Sur-CSP}$(\mathcal{B})$ between PTAS and APX-complete. An immediate consequence of this observation is the complexity dichotomy for \textsc{Max-Sur-CSP}$(\mathcal{B})$  on the Boolean domain and 3-element domains. In Section 7, we discuss for which structures $\mathcal{B}$ the \textsc{Max-Sur-CSP}$(\mathcal{B})$ problem is in PO and pose several open problems.

\section{Preliminaries}
Let $\sigma$ be some finite relational signature, which consists of $m$ relations $R_1, ..., R_m$ of arity $k_1, ..., k_m$ respectively. When $m=1$, we write $R$ for $R_1$ and $k$ for $k_1$. Consider only finite $\sigma$-structures $\mathcal{A}=(A,R_1^{\mathcal{A}},\dots,R_m^{\mathcal{A}})$ and $\mathcal{B}=(B,R_1^{\mathcal{B}},\dots,R_m^{\mathcal{B}})$, where $A$ and $B$ are underlying domains, and $R_i^{\mathcal{A}}$ and $R_i^{\mathcal{B}}$ are relations on $A$ and $B$ respectively, $i\in \{1,\dots,m\}$. A \emph{homomorphism} from $\mathcal{A}$ to $\mathcal{B}$ is a function $h: A\rightarrow B$ such that, $(a_1,\dots,a_{k_i})\in R_i^{\mathcal{A}}$ implies $(h(a_1),\dots,h(a_{k_i}))\in R_i^{\mathcal{B}}$, for every $i\in \{1,\dots,m\}$. Define $|\mathcal{A}|$ to be $|A|+|R_1^{\mathcal{A}}|+\dots+|R_m^{\mathcal{A}}|$.

The \emph {constraint satisfaction problem} \textsc{CSP}($\mathcal{B}$) takes as input some finite $\mathcal{A}$ and asks whether there is a homomorphism $h$ from $\mathcal{A}$ to the fixed template $\mathcal{B}$. The \emph {surjective constraint satisfaction problem} \textsc{Sur-CSP}($\mathcal{B}$) is defined similarly, only we insist that the homomorphism $h$ be surjective. The \emph {maximum constraint satisfaction problem} \textsc{Max-CSP}($\mathcal{B}$) asks for a function $h$ from $A$ to $B$, such that the number of constraints in $\mathcal{A}$ preserved by $h$ is maximized. The \emph {maximum surjective constraint satisfaction problem} \textsc{Max-Sur-CSP}($\mathcal{B}$) is defined similarly, only we insist that the function $h$ be surjective. Without loss of generality, we assume that $|A|\geq|B|$ in \textsc{Max-Sur-CSP}($\mathcal{B}$), otherwise there is no surjective function from $A$ to $B$.

\begin{example}
Define the relational structure $\mathcal{B}$ on the Boolean domain with two binary relations: equality and disjunction.

\textsc{Sur-CSP}($\mathcal{B}$) is in P, since it has a simple polynomial-time algorithm as follows. Define a connected component to be the maximal set of variables connected together by the equality relation. If there exists some connected component without any disjunction relation between its variables, then we assign 0 to all its variables and 1 to all other variables. Otherwise there is no surjective assignment.

We call \textsc{Max-Sur-CSP}($\mathcal{B}$) the \emph{Minimum-Asymmetric-Cut problem}. In fact, if there is only the equality relation, this problem
reduces to the \emph{Minimum-Cut problem}. The disjunction relation makes this cut asymmetric.
\end{example}

\begin{example}
Define the relational structure $\mathcal{B}$ on the domain $B=\{0,\dots,5\}$ with the binary relation
$$R^\mathcal{B}=\{(x,y)\,:\,x-y \text{ mod } 6 = \pm1\}.$$

This structure is traditionally denoted by $C_6$ in graph theory. The complexity of the \emph{list homomorphism problem} on this structure has been studied in \cite{feder1999list}. The complexity of \textsc{Sur-CSP}($C_6$) is still open \cite{survey}.
\end{example}

\begin{example}
Define the relational structure $\mathcal{B}$ on the domain $B=\{0,1,2,3\}$ with the binary relation
$$R^\mathcal{B}=\{(x,y)\,:\,x-y \text{ mod } 4 = 1 \text{ or }0\}.$$
This structure is denoted by $C_4^{\text{ref}}$.

\textsc{Sur-CSP}($C_4^{\text{ref}}$) has been proved to be NP-complete \cite{paulusma2011computational}.

\end{example}

\begin{example}
Define the relational structure $\mathcal{B}$ on the domain $B=\{0,1,2\}$ with the ternary relation
$$R^\mathcal{B}=\{0,1,2\}^3\backslash\{(x,y,z)\,:\,x,y,z\text{ distinct}\}.$$

\textsc{Sur-CSP}($\mathcal{B}$), also known as the \emph{No-Rainbow-Coloring problem}, has open complexity \cite{bodirsky2004constraint}. We call \textsc{Max-Sur-CSP}($\mathcal{B}$) the \emph{Minimum-Rainbow-Coloring problem}.
\end{example}

An optimization problem is said to be an \emph{NP optimization problem}  \cite{creignou2001complexity} if instances and solutions can be recognized in polynomial time; solutions are polynomial-bounded in the input size; and the objective function can be computed in polynomial time. In this article, we consider only maximization problems of this form, i.e., \emph{NP maximization problems}.

\begin{definition}
A solution to an instance $\mathcal{I}$ of an NP maximization problem $\Pi$ is \emph{$r$-approximate} if it has value $Val$ satisfying $Val/Opt\geq r$, where $Opt$ is the maximal value for a solution of $\mathcal{I}$.\footnote{There are other conventions to define the ratio of an approximate solution, e.g., $max\{Val/Opt,Opt/Val\}$ in \cite{creignou2001complexity} and $1-Val/Opt$ in \cite{papadimitriou1994computational}. These conventions are equivalent for maximization problems. } A \emph{polynomial-time $r$-approximation algorithm} for an NP maximization problem $\Pi$ is an algorithm which, given an instance  $\mathcal{I}$,  computes an $r$-approximate solution in time polynomial in $|\mathcal{I}|$. We say that $r$ is a \emph{polynomial-time approximation ratio} of an NP maximization problem $\Pi$ if there exists a polynomial-time $r$-approximation algorithm for $\Pi$.
\end{definition}

The following definitions can all be found in \cite{creignou2001complexity}, only with a different convention of the approximation ratio.
\begin{definition}
An NP maximization problem is in the class \emph{PO} if it has an algorithm which computes the optimal solution in polynomial time; and is in the class \emph{APX} if it has a polynomial-time $r$-approximation algorithm, where $r$ is some constant real number in $(0,1]$.
\end{definition}

\begin{definition}
We say that an NP maximization problem $\Pi$ has a \emph{Polynomial-Time Approximation Scheme (PTAS)} if there is an approximation algorithm that takes as input both an instance $\mathcal{I}$ and a fixed rational parameter $\epsilon>0$, and outputs a solution which is $(1-\epsilon)$-approximate in time polynomial in $|\mathcal{I}|$. The class of optimization problems admitting a PTAS algorithm is also denoted by \emph{PTAS}.
\end{definition}

\begin{definition}
An NP maximization problem $\Pi_1$ is said to be \emph{AP-reducible} to an NP maximization problem $\Pi_2$ if there are two polynomial-time computable functions $F$ and $G$ and a constant $\alpha$ such that
\begin{enumerate}
\item for any instance $\mathcal{I}$ of $\Pi_1$, $F(\mathcal{I})$ is an instance of $\Pi_2$;
\item for any instance $\mathcal{I}$ of $\Pi_1$, and any feasible solution $s'$ of $F(\mathcal{I})$, $G(\mathcal{I},s')$ is a feasible solution of $\mathcal{I}$;
\item for any instance $\mathcal{I}$ of $\Pi_1$, and any $r\leq1$, if $s'$ is an $r$-approximate solution of $F(\mathcal{I})$ then $G(\mathcal{I},s')$ is an $(1-(1-r)\alpha-o(1))$-approximate solution of $\mathcal{I}$, where the $o$-notation is with respect to $|\mathcal{I}|$.
\end{enumerate}

An NP maximization problem is \emph{APX-hard} if every problem in APX is AP-reducible to it. It is a well-known fact (see, e.g., \cite{creignou2001complexity}) that AP-reductions compose, and that if $\Pi_1$ is AP-reducible to $\Pi_2$ and $\Pi_2$ is in PTAS (resp., APX), then so is $\Pi_1$.
\end{definition}

\section{\textsc{Max-Sur-CSP}($\mathcal{B}$) is in APX}
In this section, we first provide a simple algorithm to show that every \textsc{Max-CSP}$(\mathcal{B})$ is in APX. Then we adapt this algorithm in order to show that every \textsc{Max-Sur-CSP}$(\mathcal{B})$ is also in APX.
\begin{proposition}
\textsc{Max-CSP}$(\mathcal{B})$ is in APX for any finite $\mathcal{B}$.
\end{proposition}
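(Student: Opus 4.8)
The plan is to give a simple randomized (or equivalently derandomized) assignment that satisfies a constant fraction of the constraints in expectation, and then compare this against the trivial upper bound $Opt \leq |R_1^{\mathcal{A}}| + \dots + |R_m^{\mathcal{A}}|$, the total number of constraints. Since the template $\mathcal{B}$ is fixed, the domain $B$ has constant size $|B| = b$, and each relation $R_i^{\mathcal{B}}$ is a fixed nonempty (we may assume) subset of $B^{k_i}$ with $|R_i^{\mathcal{B}}| \geq 1$. The key observation is that for a single constraint $(a_1,\dots,a_{k_i}) \in R_i^{\mathcal{A}}$, a uniformly random function $h \colon A \to B$ satisfies it with probability exactly $|R_i^{\mathcal{B}}| / b^{k_i}$, which is a positive constant depending only on $\mathcal{B}$.

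First I would let $c = \min_i \, |R_i^{\mathcal{B}}| / b^{k_i}$ over all $i$ with $R_i^{\mathcal{A}}$ nonempty; this is a constant in $(0,1]$ determined entirely by the fixed template $\mathcal{B}$. By linearity of expectation, a uniformly random $h$ satisfies at least $c$ times the total number of constraints in expectation, hence $\mathbb{E}[Val(h)] \geq c \cdot (|R_1^{\mathcal{A}}| + \dots + |R_m^{\mathcal{A}}|) \geq c \cdot Opt$. This already shows that a random assignment is $c$-approximate in expectation. To obtain a deterministic polynomial-time algorithm, I would derandomize using the method of conditional expectations: process the variables of $A$ one at a time, and for each variable choose the value in $B$ that maximizes the conditional expectation of the number of satisfied constraints given the choices made so far. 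Since $|B|$ and the number of constraints are both polynomially bounded (indeed $|B|$ is constant) and each conditional expectation is a sum over constraints that is computable in polynomial time, the whole procedure runs in time polynomial in $|\mathcal{A}|$ and produces an assignment whose value is at least the expectation, hence at least $c \cdot Opt$.

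Combining these gives a polynomial-time $c$-approximation algorithm with $c$ a fixed constant in $(0,1]$, which by definition places \textsc{Max-CSP}$(\mathcal{B})$ in APX. A minor edge case to dispatch at the start: if some $R_i^{\mathcal{B}}$ is empty then any constraint of type $R_i$ in $\mathcal{A}$ can never be satisfied, so those constraints contribute nothing to $Opt$ and may be dropped before computing $c$ and before applying the algorithm; and if all relations of $\mathcal{B}$ are empty while $\mathcal{A}$ has no constraints, then $Opt = 0$ and the statement is vacuous.

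I do not anticipate a genuine obstacle here, since the argument is the standard constant-fraction approximation for \textsc{Max-CSP}; the only points requiring care are making the constant $c$ depend solely on the fixed $\mathcal{B}$ (not on $\mathcal{A}$) and stating the derandomization cleanly. The real work of the paper lies in the subsequent adaptation to the surjective setting, where a uniformly random assignment need not be surjective, so the present proposition is a warm-up whose main purpose is to fix the technique that the following \textsc{Max-Sur-CSP} argument will build upon.
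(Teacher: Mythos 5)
Your proof is correct and follows essentially the same route as the paper's: a uniformly random assignment satisfies each constraint with probability $|R_i^{\mathcal{B}}|/|B|^{k_i}$, giving the ratio $r=\min_i |R_i^{\mathcal{B}}|/|B|^{k_i}$, followed by derandomization via conditional expectations. Your additional handling of empty relations is a reasonable (if minor) refinement that the paper glosses over.
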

\begin{proof}
We compute a function $h$ from $A$ to $B$ randomly, i.e., for every $a\in A$, choose $h(a)$ uniformly at random from $B$. Every $k_i$-tuple in $R_i^\mathcal{A}$ is preserved by $h$ with probability $\frac{|R_i^\mathcal{B}|}{|B|^{k_i}}$. Thus we get a randomized $r$-approximation algorithm, where $r=\min_i{\frac{|R_i^\mathcal{B}|}{|B|^{k_i}}}$ is the expected ratio over all random choices. This algorithm can be derandomized via conditional expectations \cite{motwani1995randomized}. In fact, it suffices to select at each step the choice with the largest expected number of satisfied constraints. The expected number of satisfied constraints under some partial assignment can be computed in polynomial time. In this way, we obtain a deterministic $r$-approximation algorithm which runs in polynomial time. Thus \textsc{Max-CSP}($\mathcal{B}$) is in APX.
\end{proof}

Since the function $h$ so obtained is not necessarily surjective, the algorithm presented in the proof does not show that  \textsc{Max-Sur-CSP}($\mathcal{B}$) is in APX. To resolve this problem, the idea is to fix \emph{some} function values of $h$ at the beginning, and to choose the other function values randomly. We start by the simple case where the signature $\sigma$ consists of only one relation. We have the following lemma:
\begin{lemma}
For any $0<r<\frac{|R^\mathcal{B}|}{|B|^k}$, there exists a randomized $r$-approximation algorithm for \textsc{Max-Sur-CSP}$(\mathcal{B})$ which runs in polynomial time.
\end{lemma}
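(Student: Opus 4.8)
The plan is to turn the uniform-random assignment of the previous proposition into a \emph{surjective} one by pinning down $|B|$ carefully chosen variables, while exploiting the gap between the target ratio $r$ and the ``free'' ratio $r_0 := \frac{|R^\mathcal{B}|}{|B|^k}$. Write $\delta := 1 - r/r_0$, so that $\delta \in (0,1)$ since $0 < r < r_0$. To force surjectivity I would fix $|B|$ distinct variables $a_1,\dots,a_{|B|}$ of $A$, set $h(a_j)=b_j$ where $B=\{b_1,\dots,b_{|B|}\}$, and then assign every remaining variable a value of $B$ uniformly and independently at random. Whatever happens on the free variables, every element of $B$ is already in the image, so $h$ is surjective; this is legitimate because $|A|\ge|B|$.

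Next I would estimate the expected number of satisfied constraints. A tuple of $R^\mathcal{A}$ none of whose coordinates is a fixed variable is satisfied with probability exactly $r_0$, just as in the previous proposition; for the tuples that do touch a fixed variable I would simply be pessimistic and count nothing. Hence, writing $D$ for the number of tuples of $R^\mathcal{A}$ containing at least one fixed variable, the expected value of the output is at least $r_0\,(|R^\mathcal{A}| - D)$. Since $Opt \le |R^\mathcal{A}|$, it then suffices to arrange $D \le \delta\,|R^\mathcal{A}|$, because in that case the expected value is at least $r_0(1-\delta)|R^\mathcal{A}| = r\,|R^\mathcal{A}| \ge r\cdot Opt$, which is exactly what a randomized $r$-approximation requires.

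The only delicate point, and the main obstacle, is choosing the fixed variables so that $D$ is small: a single high-degree variable could meet a large fraction of the constraints. I would resolve this by a counting argument. Let $\deg(a)$ denote the number of tuples of $R^\mathcal{A}$ in which the variable $a$ occurs. Each tuple has $k$ coordinates, hence contains at most $k$ distinct variables, so $\sum_{a\in A}\deg(a)\le k\,|R^\mathcal{A}|$; consequently the $|B|$ variables of smallest degree have degree sum at most $\frac{|B|}{|A|}\,k\,|R^\mathcal{A}|$. Taking exactly these as the fixed variables and bounding $D$ by the sum of their degrees gives $D \le \frac{|B|k}{|A|}\,|R^\mathcal{A}|$, which is at most $\delta\,|R^\mathcal{A}|$ as soon as $|A|\ge |B|k/\delta$.

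Finally I would dispose of the small instances. If $|A| < |B|k/\delta$, then $|A|$ is bounded by a constant depending only on $\mathcal{B}$ and $r$, so $|R^\mathcal{A}|\le |A|^k$ and the number $|B|^{|A|}$ of candidate functions $h\colon A\to B$ are both bounded by constants; I can enumerate all such functions, discard the non-surjective ones, and return an optimal (in particular $r$-approximate) surjective solution in constant time. Combining the two cases yields the claimed randomized polynomial-time $r$-approximation algorithm for \textsc{Max-Sur-CSP}$(\mathcal{B})$.
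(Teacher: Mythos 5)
Your proposal is correct and follows essentially the same route as the paper's proof: fix a bijection on the $|B|$ lowest-degree variables, randomize the rest, bound the number of affected tuples by the degree-sum argument ($\le k|R^\mathcal{A}|\,|B|/|A|$), and handle the finitely many small instances by exhaustive tabulation. The threshold $|A|\ge k|B|/\delta$ you derive is exactly the paper's cutoff written in the notation $\delta=1-r/r_0$.
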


\begin{proof}
Consider the following algorithm, which we call \textsc{Approx}. First, sort the elements in $A$ in increasing order of their \emph{degrees}, which is defined to be the total number of occurrences among all tuples in all relations in $\mathcal{A}$. Then construct an arbitrary bijective function on the set of the first $|B|$ elements in $A$ into the set $B$. Finally, extend this function onto the whole set of $A$ by choosing the other function values uniformly at random. This algorithm runs in polynomial time, and always returns a surjective solution. Let us analyze its approximation performance.

Let $Val$ be the number of preserved $k$-tuples in $R^\mathcal{A}$ in our solution, and $Opt$ be the maximum possible number of preserved $k$-tuples in $R^\mathcal{A}$. Since the sum of degrees of all elements in $A$ is $k|R^\mathcal{A}|$, the sum of the $|B|$ smallest degrees is at most $k|R^\mathcal{A}|\cdot\frac{|B|}{|A|}$. So there are at least ($|R^\mathcal{A}|- k|R^\mathcal{A}|\cdot\frac{|B|}{|A|}$) $k$-tuples in $R^\mathcal{A}$ which are not incident with any of the first $|B|$ elements in $A$. Every such $k$-tuple is satisfied with probability $\frac{|R^\mathcal{B}|}{|B|^k}$ under uniformly random choices. So we have:

\renewcommand{\arraystretch}{3}
$\begin{array}{llll}
\displaystyle\frac{\mathbb{E}[Val]}{Opt} & \geq  & \displaystyle\frac{1}{Opt}\bigg(|R^\mathcal{A}|- k|R^\mathcal{A}|\cdot\frac{|B|}{|A|}\bigg)\cdot\frac{|R^\mathcal{B}|}{|B|^k}\\
& \geq  & \displaystyle \bigg (1- k\cdot\frac{|B|}{|A|}\bigg )\cdot\frac{|R^\mathcal{B}|}{|B|^k}\, & (\text{since } Opt\leq|R^\mathcal{A}|),\\
  \end{array}$\\

\noindent where $k$, $|B|$ and $|R^\mathcal{B}|$ are constant, only $|A|$ is decided by the input instance. Thus for any $0<r<\frac{|R^\mathcal{B}|}{|B|^k}$, we have a randomized polynomial-time $r$-approximation algorithm: follow the \textsc{Approx} algorithm when $|A|>\frac{k|B|}{1-r\frac{|B|^k}{|R^\mathcal{B}|}}$, and tabulate the solution otherwise.
\end{proof}

In general, $\sigma$ consists of $m$ relations $R_1,\dots, R_m$ with arity $k_1,\dots,k_m$ respectively.  A similar analysis leads to:

\renewcommand{\arraystretch}{3}
$\begin{array}{lll}
\displaystyle\frac{\mathbb{E}[Val]}{Opt} & \geq  & \displaystyle\frac{1}{opt}\bigg(\sum_i{|R_i^\mathcal{A}|}- \Big(\sum_i{k_i}|R_i^\mathcal{A}|\Big)\frac{|B|}{|A|}\bigg)\cdot\Big(\min_i{\frac{|R_i^\mathcal{B}|}{|B|^{k_i}}}\Big)\\
& \geq  & \displaystyle \bigg (1- k_{max}\cdot\frac{|B|}{|A|}\bigg )\cdot\Big(\min_i{\frac{|R_i^\mathcal{B}|}{|B|^{k_i}}}\Big)\qquad (\text{since } Opt\leq\sum_i{|R_i^\mathcal{A}|}).\\
  \end{array}$\\
Thus we have a randomized polynomial-time $r$-approximation algorithm, for any $0<r<\min_i{\frac{|R_i^\mathcal{B}|}{|B|^{k_i}}}$.

\begin{proposition}
\label{APX}
\textsc{Max-Sur-CSP}$(\mathcal{B})$ is in APX for any finite $\mathcal{B}$.
\end{proposition}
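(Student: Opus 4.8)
The plan is to combine the randomized algorithm whose general analysis immediately precedes this proposition with a derandomization step, in direct parallel to the proof that \textsc{Max-CSP}$(\mathcal{B})$ is in APX. The displayed bound above already shows that the $m$-relation version of \textsc{Approx} achieves $\mathbb{E}[Val]/Opt \geq \big(1 - k_{max}\cdot\frac{|B|}{|A|}\big)\cdot\min_i \frac{|R_i^\mathcal{B}|}{|B|^{k_i}}$. As in the lemma, I would split into two regimes according to $|A|$: when $|A|$ lies below a suitable constant threshold (chosen exactly as in the lemma's proof, but with $k$ replaced by $k_{max}$ and $\frac{|R^\mathcal{B}|}{|B|^k}$ by $\min_i\frac{|R_i^\mathcal{B}|}{|B|^{k_i}}$), I would tabulate an optimal surjective solution by exhaustive search, since the number of candidate maps $A\to B$ is then bounded by a constant; and when $|A|$ exceeds the threshold, the displayed bound guarantees expected ratio at least a fixed constant $r$ with $0<r<\min_i\frac{|R_i^\mathcal{B}|}{|B|^{k_i}}$. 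A degenerate case must be dealt with first: if some $R_i^\mathcal{B}=\emptyset$, the corresponding tuples of $R_i^\mathcal{A}$ can never be satisfied and contribute to neither $Val$ nor $Opt$, so I would discard such relations and take the minimum only over those with $R_i^\mathcal{B}\neq\emptyset$, keeping $r$ a strictly positive constant.

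Next I would derandomize the large-$|A|$ case via the method of conditional expectations. The crucial observation is that surjectivity is secured \emph{before} any random choice is made: the initial bijection from the $|B|$ lowest-degree elements of $A$ into $B$ fixes those values once and for all, so the remaining elements may be assigned without any surjectivity constraint. I would therefore process the remaining elements one at a time, assigning to each the value in $B$ that maximizes the conditional expectation of the number of satisfied tuples given the assignment so far. By linearity of expectation over the tuples of $R_1^\mathcal{A},\dots,R_m^\mathcal{A}$, this conditional expectation is computable in polynomial time, and the greedy rule makes it nondecreasing, so the final deterministic solution satisfies $Val\geq\mathbb{E}[Val]$ and hence $Val/Opt\geq r$. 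Assembling these pieces yields a deterministic polynomial-time algorithm that always outputs a surjective map and achieves the fixed ratio $r>0$, which is precisely membership in APX.

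The step I expect to need the most care is the derandomization under the surjectivity requirement, because conditional expectations are a delicate tool whenever feasibility is constrained — greedily fixing a variable could in principle destroy surjectivity. Here the difficulty is defused by the design of \textsc{Approx}: because the bijective part is fixed deterministically and up front, the conditional-expectation argument ranges only over genuinely free variables and reduces to exactly the computation used in the \textsc{Max-CSP} case. The remaining work — pinning down the threshold for $|A|$ and checking that the tabulation branch runs in constant time — is routine.
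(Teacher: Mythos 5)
Your proposal is correct and follows essentially the same route as the paper: the paper's proof likewise takes the randomized algorithm and bound from the preceding lemma and its $m$-relation generalization, handles small $|A|$ by tabulation, and derandomizes via conditional expectations, relying implicitly on the fact that the initial bijection secures surjectivity before any random choices are made. Your remark about discarding relations with $R_i^\mathcal{B}=\emptyset$ is a sensible technical refinement that the paper omits, but it does not change the argument.
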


\begin{proof}
We apply again the derandomization via conditional expectation to get a deterministic polynomial-time $r$-approximation algorithm, for any $0<r<\min_i{\frac{|R_i^\mathcal{B}|}{|B|^{k_i}}}$.
\end{proof}

\section{Inapproximability for \textsc{Max-Sur-CSP}$(\mathcal{B})$}
In this section, we show that \textsc{Max-Sur-CSP}($\mathcal{B}$) is \emph{harder} than \textsc{Max-CSP}($\mathcal{B}$) for the same $\mathcal{B}$, in the sense that the approximation ratio of \textsc{Max-Sur-CSP}($\mathcal{B}$) cannot exceed the approximation ratio of \textsc{Max-CSP}($\mathcal{B}$); and that the APX-hardness of \textsc{Max-CSP}$(\mathcal{B})$ implies the APX-hardness of \textsc{Max-Sur-CSP}$(\mathcal{B})$.

\begin{theorem}
\label{inapproximability}
Let $r\in(0,1]$. If \textsc{Max-CSP}$(\mathcal{B})$ is not polynomial-time $r$-approximable, neither is \textsc{Max-Sur-CSP}$(\mathcal{B})$.
\end{theorem}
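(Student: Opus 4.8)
The plan is to prove the contrapositive: I will convert any polynomial-time $r$-approximation algorithm for \textsc{Max-Sur-CSP}$(\mathcal{B})$ into one for \textsc{Max-CSP}$(\mathcal{B})$. The guiding intuition is that \textsc{Max-Sur-CSP}$(\mathcal{B})$ is simply \textsc{Max-CSP}$(\mathcal{B})$ with the surjectivity requirement added, so the task reduces to showing that this requirement can be imposed ``for free'' by padding the instance with elements that cannot influence the objective.

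First I would set up the reduction. Given an instance $\mathcal{A}$ of \textsc{Max-CSP}$(\mathcal{B})$, construct an instance $\mathcal{A}'$ of \textsc{Max-Sur-CSP}$(\mathcal{B})$ by adjoining $|B|$ fresh elements to the domain $A$ while leaving every relation unchanged, i.e.\ $R_i^{\mathcal{A}'}=R_i^{\mathcal{A}}$ for all $i$, so the new elements occur in no tuple. As $|B|$ is constant, this is computable in polynomial time, and $|A'|\geq|B|$, so $\mathcal{A}'$ is a legitimate \textsc{Max-Sur-CSP} instance. The heart of the matter is the identity $Opt(\mathcal{A}')=Opt(\mathcal{A})$: any $h\colon A\to B$ extends to a surjective $h'\colon A'\to B$ by mapping the $|B|$ new elements bijectively onto $B$, with the same number of satisfied constraints; conversely, the restriction to $A$ of any surjective solution of $\mathcal{A}'$ satisfies exactly the same constraints, since the adjoined elements belong to no tuple.

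Given this identity, the argument closes at once. Running the assumed algorithm on $\mathcal{A}'$ returns a surjective $h'$ with value $Val\geq r\cdot Opt(\mathcal{A}')=r\cdot Opt(\mathcal{A})$, and the restriction $h'|_A$ is a feasible \textsc{Max-CSP} solution of the same value, hence $r$-approximate for $\mathcal{A}$. I expect no genuine obstacle, the main point being simply to notice that the reduction preserves both satisfied-constraint counts and optimal values \emph{exactly}, so the ratio $r$ transfers without loss; this is sharper than invoking the generic \emph{AP}-reduction machinery, which would only deliver $r-o(1)$. The sole case needing a one-line remark is the degenerate $Opt(\mathcal{A})=0$, covered by the usual convention that the ratio is then $1$ and the two problems agree trivially.
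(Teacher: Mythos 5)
Your proposal is correct and follows essentially the same route as the paper: pad the instance with $|B|$ fresh elements occurring in no tuple, observe that the optima coincide exactly, and transfer any $r$-approximate surjective solution back by restriction. The only additions beyond the paper's argument are the explicit treatment of the degenerate case $Opt(\mathcal{A})=0$ and the remark that the ratio transfers without the $o(1)$ loss of a generic AP-reduction, both of which are harmless refinements.
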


\begin{proof}
Suppose there is a polynomial-time $r$-approximation algorithm for \textsc{Max-Sur-CSP}($\mathcal{B}$). We will prove that there is also a polynomial-time $r$-approximation algorithm for \textsc{Max-CSP}($\mathcal{B}$), which causes a contradiction.

Given an instance $\mathcal{A}$ of \textsc{Max-CSP}($\mathcal{B}$), construct an instance $\mathcal{A}'$ of \textsc{Max-Sur-CSP}($\mathcal{B}$) as following: extend the underlying domain $A$ to $A'$ by adding $|B|$ new elements, and keep $R_i^{\mathcal{A}'}$ to be the same as $R_i^{\mathcal{A}}$, for every $i\in \{1,..,m\}$. The optimum of \textsc{Max-Sur-CSP}($\mathcal{B}$) with the instance $\mathcal{A}'$ equals that of \textsc{Max-CSP}($\mathcal{B}$) with the instance $\mathcal{A}$. Any $r$-approximate solution of \textsc{Sur-Max-CSP}($\mathcal{B}$) with the instance $\mathcal{A}'$ is also an $r$-approximate solution of \textsc{Max-CSP}($\mathcal{B}$) with the instance $\mathcal{A}$. Since $\mathcal{A}$ is arbitrary, we thus have a polynomial-time $r$-approximation algorithm for \textsc{Max-CSP}($\mathcal{B}$).
\end{proof}

\begin{proposition}
\label{APX-hardness}
If \textsc{Max-CSP}$(\mathcal{B})$ is APX-hard, \textsc{Max-Sur-CSP}$(\mathcal{B})$ is also APX-hard.
\end{proposition}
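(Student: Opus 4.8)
The plan is to exhibit a single AP-reduction from \textsc{Max-CSP}($\mathcal{B}$) to \textsc{Max-Sur-CSP}($\mathcal{B}$) and then invoke the composition of AP-reductions. Since \textsc{Max-CSP}($\mathcal{B}$) is assumed APX-hard, every problem in APX is AP-reducible to it; composing with the reduction I construct yields an AP-reduction from every problem in APX to \textsc{Max-Sur-CSP}($\mathcal{B}$), which is exactly the statement that \textsc{Max-Sur-CSP}($\mathcal{B}$) is APX-hard.

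For the reduction I would reuse the construction from the proof of Theorem~\ref{inapproximability}. Define $F$ by $F(\mathcal{A})=\mathcal{A}'$, where $\mathcal{A}'$ is obtained from $\mathcal{A}$ by adjoining $|B|$ fresh isolated elements to the domain and leaving every relation $R_i^{\mathcal{A}'}=R_i^{\mathcal{A}}$ unchanged; this is clearly computable in polynomial time. Define $G(\mathcal{A},s')$ to be the restriction of the surjective solution $s'$ of $\mathcal{A}'$ to the original domain $A$. Because \textsc{Max-CSP}($\mathcal{B}$) imposes no surjectivity requirement, every function $A\to B$ is feasible, so this restriction is automatically a feasible solution of $\mathcal{A}$, and the first two conditions of an AP-reduction hold immediately.

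The heart of the argument is the approximation bound of condition~3, for which I would first establish that $Opt(\mathcal{A}')=Opt(\mathcal{A})$. Any function for \textsc{Max-CSP} on $\mathcal{A}$ extends to a surjective function on $\mathcal{A}'$ by mapping the $|B|$ adjoined elements bijectively onto $B$, and this extension satisfies exactly the same constraints; conversely, any surjective solution on $\mathcal{A}'$ restricts to a function on $\mathcal{A}$ satisfying the same constraints, since the adjoined elements are isolated and occur in no tuple. Hence the two optima coincide, and for the same reason $Val(G(\mathcal{A},s'))=Val(s')$. Therefore, if $s'$ is $r$-approximate, then $Val(G(\mathcal{A},s'))/Opt(\mathcal{A})=Val(s')/Opt(\mathcal{A}')\geq r$, so condition~3 holds with $\alpha=1$ and a vanishing $o(1)$ term, since $1-(1-r)\cdot 1=r$.

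I do not anticipate a genuine obstacle: the construction is identical to the one in the inapproximability theorem, and the only points requiring care are the routine verification that the adjoined isolated elements change neither the optimum nor the value of a restricted solution, together with the bookkeeping that $\alpha=1$ makes the guaranteed ratio match exactly. The conceptual content lies entirely in the observation that APX-hardness transfers through this one AP-reduction by composition, so no problem-specific approximation analysis is needed beyond what Theorem~\ref{inapproximability} already supplies.
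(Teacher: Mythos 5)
Your proposal is correct and follows essentially the same route as the paper: both use the construction from Theorem~\ref{inapproximability} (adjoin $|B|$ isolated elements, keep the relations), observe that the optima and solution values are preserved, and conclude that this is an AP-reduction with $\alpha=1$, so APX-hardness transfers by composition. You simply spell out the verification of the AP-reduction conditions in more detail than the paper does.
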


\begin{proof}
Given an instance $\mathcal{A}$ of \textsc{Max-CSP}($\mathcal{B}$), construct an instance $\mathcal{A}'$ of \textsc{Max-Sur-CSP}($\mathcal{B}$) as in the above proof. The optimum of \textsc{Max-Sur-CSP}($\mathcal{B}$) with the instance $\mathcal{A}'$ equals that of \textsc{Max-CSP}($\mathcal{B}$) with the instance $\mathcal{A}$. So we have an AP-reduction from \textsc{Max-CSP}($\mathcal{B}$) to \textsc{Max-Sur-CSP}($\mathcal{B}$), where the constant $\alpha$ in the definition of the AP-reduction is 1. Since AP-reductions compose, we then proved the proposition.
\end{proof}

\begin{corollary}
\textsc{Max-Sur-CSP}$(C_6)$ is APX-hard and, under the unique games conjecture\footnote{A formal description of this conjecture could be found in \cite{Khot2002}.}, any polynomial-time approximation ratio of \textsc{Max-Sur-CSP}$(C_6)$ is at most $\alpha_{GW}$(=0.878\dots).
\end{corollary}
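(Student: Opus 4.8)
The plan is to connect both claims about \textsc{Max-Sur-CSP}$(C_6)$ back to known facts about \textsc{Max-CSP}$(C_6)$, using the two results we have just established. For the APX-hardness, the strategy is to invoke Proposition~\ref{APX-hardness}: it suffices to show that \textsc{Max-CSP}$(C_6)$ is itself APX-hard. The structure $C_6$ is an undirected graph (the $6$-cycle), so \textsc{Max-CSP}$(C_6)$ is the problem of maximizing the number of edges of the input mapped into edges of $C_6$. Since $C_6$ is bipartite and its core is a single edge $K_2$, one expects \textsc{Max-CSP}$(C_6)$ to be equivalent, up to an AP-reduction, to \textsc{Max-Cut}, which is the textbook APX-hard problem. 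The first step, then, is to exhibit this reduction (or cite the $3$-element dichotomy of \cite{jonsson2006approximability}, or the general graph \textsc{Max-CSP} classification, to certify APX-hardness of \textsc{Max-CSP}$(C_6)$); APX-hardness of \textsc{Max-Sur-CSP}$(C_6)$ then follows immediately.

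For the quantitative bound under the unique games conjecture, I would appeal to Theorem~\ref{inapproximability} in its contrapositive form: if \textsc{Max-CSP}$(C_6)$ has no polynomial-time $r$-approximation for some $r$, then neither does \textsc{Max-Sur-CSP}$(C_6)$, so any approximation ratio achievable for the surjective problem is bounded above by the best ratio achievable for \textsc{Max-CSP}$(C_6)$. The key input here is the theorem of Khot--Kindler--Mossel--O'Donnell, which says that, assuming UGC, \textsc{Max-Cut} admits no polynomial-time approximation with ratio exceeding $\alpha_{GW}=0.878\dots$. The remaining work is to transport this hardness threshold from \textsc{Max-Cut} across the AP-reduction into \textsc{Max-CSP}$(C_6)$, and then through Theorem~\ref{inapproximability} into \textsc{Max-Sur-CSP}$(C_6)$.

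The main obstacle is ensuring that the AP-reduction from \textsc{Max-Cut} to \textsc{Max-CSP}$(C_6)$ preserves the threshold exactly, with scaling constant $\alpha=1$; a reduction with $\alpha\neq 1$ would distort the $\alpha_{GW}$ barrier and weaken the bound. Because $C_6$ retracts onto an edge (it has a homomorphism to $K_2$ that fixes that edge, and $K_2$ embeds into $C_6$), one can map \textsc{Max-Cut} instances directly to graph instances whose optimal \textsc{Max-CSP}$(C_6)$ value coincides, edge-for-edge, with the max cut: every properly cut edge is an edge of $C_6$, and the retraction shows no assignment beats the bipartite one. This yields a ratio-preserving reduction, so the $\alpha_{GW}$ threshold passes through cleanly. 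Chaining Theorem~\ref{inapproximability} on top then gives that no ratio above $\alpha_{GW}$ is polynomial-time achievable for \textsc{Max-Sur-CSP}$(C_6)$ under UGC, completing the corollary.
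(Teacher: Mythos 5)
Your proposal is correct and follows essentially the same route as the paper: the paper simply observes that \textsc{Max-CSP}$(C_6)$ is exactly the problem \textsc{Max-Cut} (your retraction-onto-$K_2$ argument is the justification the paper leaves implicit), then applies Proposition~\ref{APX-hardness} for the APX-hardness claim and Theorem~\ref{inapproximability} together with the Khot--Kindler--Mossel--O'Donnell result for the $\alpha_{GW}$ threshold. One minor caveat: your parenthetical alternative of citing the $3$-element dichotomy of \cite{jonsson2006approximability} would not apply, since $C_6$ lives on a $6$-element domain.
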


\begin{proof}
\textsc{Max-CSP}($C_6$) is exactly the same problem as \textsc{Max-Cut}. From the APX-hardness of \textsc{Max-Cut}, we have the APX-hardness of \textsc{Max-Sur-CSP}$(C_6)$ by Proposition~\ref{APX-hardness}.
The best approximation ratio of \textsc{Max-Cut} has been proved to be $\alpha_{GW}$, under the unique games conjecture \cite{khot2008optimal}. We then deduce from Theorem~\ref{inapproximability} that any polynomial-time approximation ratio of \textsc{Max-Sur-CSP}($C_6$) is at most $\alpha_{GW}$.
\end{proof}

\section{Approximability for \textsc{Max-Sur-CSP}$(\mathcal{B})$}

In this section we describe a PTAS for  \textsc{Max-Sur-CSP}($\mathcal{B}$), given that \textsc{Max-CSP}($\mathcal{B}$) is in PO. The generalized result is stated in the following theorem.

\begin{theorem}
\label{approximability}
Let $r\in(0,1]$. If \textsc{Max-CSP}$(\mathcal{B})$ is polynomial-time $r$-approximable, \textsc{Max-Sur-CSP}$(\mathcal{B})$ is polynomial-time $(r-\epsilon)$-approximable, for any $\epsilon>0$.
\end{theorem}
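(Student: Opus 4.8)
The plan is to convert any polynomial-time $r$-approximation algorithm $\mathcal{M}$ for \textsc{Max-CSP}$(\mathcal{B})$ into an $(r-\epsilon)$-approximation algorithm for \textsc{Max-Sur-CSP}$(\mathcal{B})$ by running $\mathcal{M}$ and then \emph{repairing} the surjectivity of its output at negligible cost. First I would dispose of small instances: let $N_0$ be a constant depending only on $\mathcal{B}$ and $\epsilon$ (fixed below). If $|A|<N_0$, then $A$ has constant size, so there are only $O(1)$ functions from $A$ to $B$; I enumerate them all, discard the non-surjective ones (at least one surjective function exists since $|A|\ge|B|$), and return the best, which is exactly optimal and hence $(r-\epsilon)$-approximate.

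For the main case $|A|\ge N_0$, run $\mathcal{M}$ on $\mathcal{A}$ (viewed as an instance of \textsc{Max-CSP}$(\mathcal{B})$) to obtain a function $h_0\colon A\to B$ that need not be surjective. Writing $Opt_{\mathrm{CSP}}$ and $Opt_{\mathrm{Sur}}$ for the two optima on $\mathcal{A}$, every surjective function is in particular a function, so $Opt_{\mathrm{CSP}}\ge Opt_{\mathrm{Sur}}$, and by assumption $Val(h_0)\ge r\cdot Opt_{\mathrm{CSP}}$. Now sort $A$ by degree as in the \textsc{Approx} algorithm and let $S$ be the set of the $|B|$ elements of smallest degree. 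Define $h$ to agree with $h_0$ on $A\setminus S$ and to map $S$ bijectively onto $B$. Then $h$ is surjective regardless of $h_0$, and it is computable in polynomial time.

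It remains to bound the loss $Val(h_0)-Val(h)$. A constraint can be affected by the reassignment only if it involves an element of $S$, so the loss is at most the sum of degrees of the elements of $S$. By the averaging argument already used for \textsc{Approx}, this degree sum is at most $k_{\max}\big(\sum_i|R_i^{\mathcal{A}}|\big)\cdot\frac{|B|}{|A|}$, where $k_{\max}=\max_i k_i$. On the other hand, the random-assignment analysis of the first proposition gives $Opt_{\mathrm{CSP}}\ge c\sum_i|R_i^{\mathcal{A}}|$ with $c=\min_i\frac{|R_i^{\mathcal{B}}|}{|B|^{k_i}}$, so $\sum_i|R_i^{\mathcal{A}}|\le Opt_{\mathrm{CSP}}/c$. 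Combining, the loss is at most $\frac{k_{\max}|B|}{c\,|A|}\cdot Opt_{\mathrm{CSP}}$, and choosing $N_0=\frac{k_{\max}|B|}{c\,\epsilon}$ forces it to be at most $\epsilon\cdot Opt_{\mathrm{CSP}}$ whenever $|A|\ge N_0$. Hence
\begin{equation*}
Val(h)\ \ge\ Val(h_0)-\epsilon\cdot Opt_{\mathrm{CSP}}\ \ge\ (r-\epsilon)\,Opt_{\mathrm{CSP}}\ \ge\ (r-\epsilon)\,Opt_{\mathrm{Sur}},
\end{equation*}
so $h$ is $(r-\epsilon)$-approximate, as required.

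The main obstacle, and the only place where real work is needed, is showing that the repair cost is a genuinely small \emph{fraction} of the optimum rather than merely a small absolute quantity. This is exactly what the two ingredients above supply: the degree-averaging bound makes the repair cost $O(1/|A|)$ relative to $\sum_i|R_i^{\mathcal{A}}|$, and the constant-ratio lower bound $Opt_{\mathrm{CSP}}\ge c\sum_i|R_i^{\mathcal{A}}|$ converts this into a bound relative to $Opt_{\mathrm{CSP}}$. One subtlety to check is the constant $c$: if some relation $R_i^{\mathcal{B}}$ is empty then $c=0$ and the argument degenerates, but such constraints can never be satisfied by any function, so they may be deleted from $\mathcal{A}$ and $\mathcal{B}$ without changing any objective value, after which every remaining $R_i^{\mathcal{B}}$ is nonempty and $c>0$.
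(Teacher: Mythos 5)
Your proof is correct, but it takes a genuinely different route from the paper's in the repair step and in the accounting of the loss. The paper's algorithm (\textsc{Approx2}) leaves $h$ untouched except at $\delta$ positions, one for each missing value of $B$, each chosen uniformly at random among the variables whose value under $h$ is duplicated; it then bounds the probability that any fixed constraint satisfied by $h$ gets disturbed by $k_{\max}|B|/(|A|-|B|)$, which gives the \emph{multiplicative} guarantee $\mathbb{E}[Val^*]\geq\bigl(1-k_{\max}\frac{|B|}{|A|-|B|}\bigr)Val$ directly relative to $Val$ itself, and finally derandomizes by enumerating the random choices. You instead overwrite the $|B|$ lowest-degree variables with a bijection onto $B$ (borrowing the degree-sorting device the paper uses only in its Section~3 APX-membership argument) and bound the damage \emph{additively} by the degree sum of those variables; to convert this absolute loss into a fraction of the optimum you need the extra lower bound $Opt\geq c\sum_i|R_i^{\mathcal{A}}|$ with $c=\min_i|R_i^{\mathcal{B}}|/|B|^{k_i}$, together with the caveat, which you correctly handle, that constraints over empty relations must be discarded so that $c>0$. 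What each approach buys: yours is deterministic from the outset, so no derandomization step is needed, and the set of modified variables is fixed and oblivious to $h_0$; the paper's needs no lower bound on $Opt$ (hence no empty-relation issue) because its loss estimate is already relative to $Val$, and it modifies only as many variables as there are missing values. Both yield the stated $(r-\epsilon)$-approximation.
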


\begin{proof}

Let \textsc{Approx1} be a polynomial-time $r$-approximation algorithm for \textsc{Max-CSP}($\mathcal{B}$). Consider first the following randomized algorithm:\\

\begin{algorithm}{Approx2}{\mathcal{A}}
h\=\CALL{Approx1}(\mathcal{A})\\
\begin{IF} {h\text{ is surjective}}
\RETURN h
\ELSE
h^*\=h\\
\begin{FOR}{\EACH b \in B \text{ such that $y$ is not an image of $h^*$}}
T\=\{x\in A \;| \;\exists\; y\in A\backslash \{x\}, \;\text{s.t.}\; h^*(x)=h^*(y)\}\\
\begin{IF}
{T=\emptyset}{\RETURN \text{\emph{No Solution.}}}
\end{IF}\\
x\=\text{an element in $T$ chosen uniformly at random}\\
\end{FOR}
h^*(x)\=y\\
\RETURN h^*.
\end{IF}
\end{algorithm}

Let us analyze the performance of this algorithm. Let $Val$ and $Val^*$ be the number of satisfied constraints in $h$ and $h^*$ respectively. Let $Opt$ and $Opt^*$ be the optimum of \textsc{Max-CSP}($\mathcal{B}$) and \textsc{Max-Sur-CSP}($\mathcal{B}$) respectively. Obviously, $Opt^*\leq Opt$.

When $h$ is surjective, we have $h^*=h$ and $\displaystyle\frac{\mathbb{E}[Val^*]}{Opt^*}\geq\frac{\mathbb{E}[Val]}{Opt}.$

When $h$ is not surjective, it is more complicated.  Let $\delta$ be the number of elements in $B$ which are not images of $h$.
For a given tuple $(x_1,\dots,x_k)$ in $A^k$ and another given tuple $(b_1,\dots,b_k)$ in $B^k$, define $p$ to be the probability that $h$ maps $(x_1,\dots,x_k)$ to $(b_1,\dots,b_k)$ and $p^*$ to be the probability that $h^*$ maps $(x_1,\dots,x_k)$ to $(b_1,\dots,b_k)$. Let $T=\{x\in A \;| \;\exists\; y\in A\backslash \{x\}, \;\text{s.t.}\; h(x)=h(y)\}$ and $t=|T|$. Obviously $t>|A|-|B|$. For every $x\in T$, the algorithm above will not modify $h^*(x)$ if there is no other variables in $A$ with the same function value $h(x)$; otherwise it will modify $h^*(x)$ with probability $\frac{\delta}{t}<\frac{|B|}{|A|-|B|}$. Let $Q$ be the conditional probability that $h^*$ does not map $(x_1,\dots,x_k)$ to $(b_1,\dots,b_k)$, given that $h$ maps $(x_1,\dots,x_k)$ to $(b_1,\dots,b_k)$. \\We have:

\renewcommand{\arraystretch}{1.5}
$\begin{array}{llll}
Q& \leq  & \sum_{i=1}^k{\mathbb{P}[\text{The algorithm modifies }h^*(x_i)]}\\
& \leq  & k\cdot\mathbb{P}[\text{ The algorithm modifies $h^*(x)$, for an arbitrary } x\in T]\\
& <  & k\cdot\displaystyle\frac{|B|}{|A|-|B|},\\
  \end{array}$\\

\noindent which implies that  $p^*>\Big(1-k\cdot\displaystyle\frac{|B|}{|A|-|B|}\Big) p$.\\
\noindent The second line of the inequality above holds because the function value of $h^*$ is modified with the same positive probability at every variable in $T$, and with probability 0 at every variable outside $T$. \\

\noindent Thus we have $\displaystyle\frac{\mathbb{E}[Val^*]}{Opt^*}\geq \displaystyle\frac{\mathbb{E}[Val^*]}{Opt}>\Big(1-k_{max}\cdot\frac{|B|}{|A|-|B|}\Big) \frac{\mathbb{E}[Val]}{Opt},$ \\
where $k_{max}$ is the maximal arity of all relations.\\

So whether $h$ is surjective or not, we always have:
$$\displaystyle\frac{\mathbb{E}[Val^*]}{Opt^*}>\Big(1-k_{max}\cdot\frac{|B|}{|A|-|B|}\Big) \frac{\mathbb{E}[Val]}{Opt}.$$
\noindent Hence for any $\epsilon>0$, we can achieve a randomized $(r-\epsilon)$-approximation algorithm for \textsc{Max-Sur-CSP}($\mathcal{B}$): first, precalculate the optimal assignment for every input relational structure whose domain size is less than $N_0=\displaystyle\lfloor\frac{r|B|k_{max}}{\epsilon}\rfloor+|B|$. The running time of this part depends only on $\epsilon$, but not on $|\mathcal{A}|$. Then for a given relational structure $\mathcal{A}$, output a precalculated solution if $|A|\leq N_0$; and execute \textsc{Approx2} with the instance $\mathcal{A}$ otherwise.

The algorithm \textsc{Approx2} can be derandomized by enumerating all possible choices in line 9. This enumerating procedure runs in time polynomial in $|\mathcal{A}|$, since there are polynomially many combinations of choices, and each combination of choices corresponds to a polynomial number of steps in the running time. Thus the derandomized algorithm still runs in time polynomial in $|\mathcal{A}|$. So we get a deterministic  polynomial-time $(r-\epsilon)$-approximation algorithm for \textsc{Max-Sur-CSP}($\mathcal{B}$).
\end{proof}

\begin{proposition}
\label{PTAS}
If \textsc{Max-CSP}$(\mathcal{B})$ is in PO, then \textsc{Max-Sur-CSP}$(\mathcal{B})$ is in PTAS and is not APX-hard, unless $P=NP$.
\end{proposition}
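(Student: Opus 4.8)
The plan is to derive both assertions directly from Theorem~\ref{approximability}, together with standard closure properties of AP-reductions. First I would note that if \textsc{Max-CSP}$(\mathcal{B})$ is in PO, then it is in particular polynomial-time $1$-approximable: an exact polynomial-time algorithm trivially outputs a solution with $Val/Opt = 1$. Instantiating Theorem~\ref{approximability} at $r=1$ then yields that \textsc{Max-Sur-CSP}$(\mathcal{B})$ is polynomial-time $(1-\epsilon)$-approximable for every $\epsilon>0$. By the definition of PTAS, this is exactly the statement that \textsc{Max-Sur-CSP}$(\mathcal{B})$ admits a PTAS, settling the first part.

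For the second part I would argue by contradiction. Suppose \textsc{Max-Sur-CSP}$(\mathcal{B})$ is APX-hard. By definition every problem $\Pi$ in APX is AP-reducible to it. Since we have just shown that \textsc{Max-Sur-CSP}$(\mathcal{B})$ lies in PTAS, and since AP-reductions preserve membership in PTAS (the closure fact recalled immediately after the definition of AP-reduction), it follows that every problem in APX also lies in PTAS; that is, APX $=$ PTAS.

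The concluding step is to invoke the known separation of these two classes: there exist APX-complete problems, for instance \textsc{Max-Cut} (see the corollary above) or \textsc{Max-3-Sat}, which admit no PTAS unless $P=NP$. Hence the collapse APX $=$ PTAS would force $P=NP$, and so \textsc{Max-Sur-CSP}$(\mathcal{B})$ cannot be APX-hard unless $P=NP$.

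I expect the only genuinely external ingredient to be this final fact, namely the established inapproximability results witnessing PTAS $\subsetneq$ APX under $P\neq NP$; everything internal to the argument is a direct application of Theorem~\ref{approximability} at $r=1$ combined with the compositionality and PTAS-preservation of AP-reductions, so no substantial obstacle arises within the paper's own machinery.
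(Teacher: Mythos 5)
Your proposal is correct and follows essentially the same route as the paper's own proof: instantiate Theorem~\ref{approximability} at $r=1$ to obtain the PTAS, then use the PTAS-preservation of AP-reductions together with the known existence (under $P\neq NP$) of APX problems without a PTAS to rule out APX-hardness. No gaps.
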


\begin{proof}
\textsc{Max-CSP}($\mathcal{B}$) being in PO implies that it has a polynomial-time approximation algorithm with approximation ratio 1. Theorem~\ref{approximability} then leads to the PTAS-containment. On the other hand, we already know that there are problems in APX but not in PTAS, unless P=NP (see, e.g.,  \cite{hastad1997some}). These problems cannot be AP-reduced to \textsc{Max-Sur-CSP}($\mathcal{B}$), otherwise \textsc{Max-Sur-CSP}($\mathcal{B}$) could not be in PTAS. So \textsc{Max-Sur-CSP}($\mathcal{B}$) is not APX-hard, unless P=NP.
\end{proof}

\begin{corollary}
The Minimum-Asymmetric-Cut problem, \textsc{Max-Sur-CSP}$(C_4^{\text{\upshape{ref}}})$, and the Minimum-Rainbow-Coloring problem are all in PTAS.
\end{corollary}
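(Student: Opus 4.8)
The plan is to apply Proposition~\ref{PTAS} to each of the three problems individually, which reduces the entire corollary to a single structural fact: for each of these three templates $\mathcal{B}$, the problem \textsc{Max-CSP}($\mathcal{B}$) lies in PO. Once that is established, Proposition~\ref{PTAS} immediately yields the PTAS-containment. So the real content is to verify the PO-membership of the three \textsc{Max-CSP} instances, and the main obstacle is that this verification is problem-specific and does not follow from any of the general machinery in the paper.

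For the Minimum-Asymmetric-Cut problem, recall that $\mathcal{B}$ is the Boolean structure with the equality and disjunction relations. First I would invoke the known complexity dichotomy for \textsc{Max-CSP}($\mathcal{B}$) on the Boolean domain \cite{creignou2001complexity}, which classifies every Boolean \textsc{Max-CSP} as either in PO or APX-complete according to whether the constraint relations satisfy a suitable (supermodularity-type) closure condition. The task is then to check that equality and disjunction together fall on the PO side of that dichotomy. Concretely, I expect that since minimizing the number of violated constraints corresponds to a minimum-cut-style objective that is submodular, the maximization of satisfied constraints is efficiently solvable, so \textsc{Max-CSP}($\mathcal{B}$) is in PO.

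For \textsc{Max-CSP}($C_4^{\text{ref}}$) and for the template underlying the Minimum-Rainbow-Coloring problem, the domains have size $4$ and $3$ respectively, so the relevant tool is the dichotomy for \textsc{Max-CSP}($\mathcal{B}$) on $3$-element domains \cite{jonsson2006approximability} together with the Boolean/small-domain classification. For each template I would simply appeal to that dichotomy and verify that the specific relation ($R^{\mathcal{B}}$ in each case) satisfies the PO-side criterion of the classification, placing the corresponding \textsc{Max-CSP} in PO. The $C_4^{\text{ref}}$ case has domain size $4$, so rather than relying on a ready-made $4$-element dichotomy I would instead argue directly that its \textsc{Max-CSP} is polynomial-time optimally solvable, for instance by exhibiting a polymorphism or an explicit optimization argument for $R^{\mathcal{B}}=\{(x,y):x-y \bmod 4 \in\{0,1\}\}$.

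The step I expect to be the main obstacle is precisely this case-by-case verification of the PO-side membership: unlike the abstract reduction in Proposition~\ref{PTAS}, establishing that each concrete \textsc{Max-CSP}($\mathcal{B}$) is in PO requires either a direct combinatorial algorithm or a careful check against the relevant dichotomy's classifying condition (supermodularity of the relations, existence of a suitable polymorphism, etc.). Once all three PO-memberships are in hand, the corollary follows by three applications of Proposition~\ref{PTAS}, with no further difficulty.
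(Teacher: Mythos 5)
Your overall structure is exactly the paper's: establish that \textsc{Max-CSP}$(\mathcal{B})$ is in PO for each of the three templates and then apply Proposition~\ref{PTAS} three times. However, the step you flag as ``the main obstacle'' --- the case-by-case verification of PO-membership --- is left unexecuted in your proposal, and this is where the gap lies. You propose to route each case through a dichotomy theorem (checking supermodularity conditions or polymorphisms), but you never actually carry out any of these checks; for the Minimum-Asymmetric-Cut you write ``I expect that \dots'', and for $C_4^{\text{ref}}$ you correctly note that no ready-made four-element dichotomy is available and then defer to an unspecified ``explicit optimization argument.'' As written, the proposal does not prove the corollary.

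The missing observation is that there is no obstacle at all: each of the three templates contains a constant tuple in every relation, so the constant assignment $h\equiv 1$ satisfies \emph{every} constraint. For equality and disjunction on the Boolean domain, $(1,1)$ lies in both relations; for $C_4^{\text{ref}}$, $(1,1)\in R^{\mathcal{B}}$ since $1-1\equiv 0 \pmod 4$; and for the Rainbow-Coloring template, $(1,1,1)$ is not a rainbow triple and hence lies in $R^{\mathcal{B}}$. Thus the optimum of each \textsc{Max-CSP}$(\mathcal{B})$ is the total number of constraints and is attained by a trivially computable assignment, so all three are in PO with no appeal to any classification theorem. This one-line argument is the paper's entire proof; your detour through the dichotomies is both unnecessary and, in the $C_4^{\text{ref}}$ case, not actually available.
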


\begin{proof}
For each of the three problems above, if the function $h$ is not required to be surjective, we can assign 1 to every variable so that all constraints are satisfied. Thus the corresponding \textsc{Max-CSP}($\mathcal{B}$) problem is in PO. We then conclude by applying Proposition~\ref{PTAS}.
\end{proof}

\section {Complexity dichotomy for \textsc{Max-Sur-CSP}($\mathcal{B}$)}
In this section, we first propose a conditional complexity dichotomy for \textsc{Max-Sur-CSP}$(\mathcal{B})$ between PTAS and APX-complete. This condition has already been proved on the Boolean domain and 3-element domains.
\begin{theorem}
\label{dichotomy}
If there is a complexity dichotomy for \textsc{Max-CSP}$(\mathcal{B})$ between PO and APX-complete, there is also a complexity dichotomy for \textsc{Max-Sur-CSP}$(\mathcal{B})$ between PTAS and APX-complete.
\end{theorem}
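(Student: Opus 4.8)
The plan is to establish the dichotomy for \textsc{Max-Sur-CSP}$(\mathcal{B})$ by splitting into the two cases furnished by the assumed dichotomy for \textsc{Max-CSP}$(\mathcal{B})$, and showing that each case of the \textsc{Max-CSP} dichotomy maps to exactly one side of the \textsc{Max-Sur-CSP} dichotomy. Concretely, suppose we are given a finite structure $\mathcal{B}$. By the hypothesized dichotomy for \textsc{Max-CSP}$(\mathcal{B})$, either \textsc{Max-CSP}$(\mathcal{B})$ is in PO, or it is APX-complete. These two cases will correspond, respectively, to \textsc{Max-Sur-CSP}$(\mathcal{B})$ being in PTAS and to it being APX-complete.

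First I would handle the PO case. If \textsc{Max-CSP}$(\mathcal{B})$ is in PO, then Proposition~\ref{PTAS} applies directly and tells us that \textsc{Max-Sur-CSP}$(\mathcal{B})$ is in PTAS. This is the easy direction, since the work has already been done in the earlier proposition via Theorem~\ref{approximability}: the PO algorithm is in particular a polynomial-time $1$-approximation algorithm, and the $(r-\epsilon)$-approximation scheme with $r=1$ yields the desired PTAS.

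Next I would handle the APX-complete case. If \textsc{Max-CSP}$(\mathcal{B})$ is APX-complete, then in particular it is APX-hard, so by Proposition~\ref{APX-hardness} the problem \textsc{Max-Sur-CSP}$(\mathcal{B})$ is also APX-hard. It remains to observe that \textsc{Max-Sur-CSP}$(\mathcal{B})$ lies in APX, which is exactly Proposition~\ref{APX}. Combining membership in APX with APX-hardness gives that \textsc{Max-Sur-CSP}$(\mathcal{B})$ is APX-complete. Thus in both cases \textsc{Max-Sur-CSP}$(\mathcal{B})$ falls on the matching side, establishing the claimed dichotomy.

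The only subtlety worth flagging—though it is not really an obstacle—is to confirm that the two cases for \textsc{Max-Sur-CSP} are genuinely exclusive, i.e.\ that a problem in PTAS cannot simultaneously be APX-complete unless $\mathrm{P}=\mathrm{NP}$. This exclusivity is precisely the content of the second half of Proposition~\ref{PTAS}, which shows that a problem obtained from the PO case is \emph{not} APX-hard unless $\mathrm{P}=\mathrm{NP}$; so the two halves of the dichotomy do not overlap. Since each of the three invoked results (Propositions~\ref{APX},~\ref{APX-hardness}, and~\ref{PTAS}) is already proved in the excerpt, the argument is essentially a case split with no new technical content, and I would expect no real difficulty here.
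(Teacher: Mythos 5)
Your proposal is correct and follows essentially the same route as the paper, which proves the theorem in one line as ``a combination of Proposition~\ref{APX-hardness} and Proposition~\ref{PTAS}''; your write-up simply makes the case split explicit and adds the (needed) appeal to Proposition~\ref{APX} for APX-membership. No issues.
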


\begin{proof}
This result is a combination of Proposition~\ref{APX-hardness} and Proposition~\ref{PTAS}.
\end{proof}

\subsection{On the Boolean domain}
\begin{definition}
A constraint $R$ is said to be
\begin{itemize}
\item \emph{0-valid} if $(0,...,0)\in R$.
\item \emph{1-valid} if $(1,...,1)\in R$.
\item \emph{2-monotone} if $R$ is expressible as a DNF-formula either of the form $(x_1\wedge ...\wedge x_p)$ or $(\overline{y_1}\wedge...\wedge \overline{y_q})$ or  $(x_1\wedge ...\wedge x_p) \vee (\overline{y_1}\wedge...\wedge \overline{y_q})$.
\end{itemize}
A relational structure $\mathcal{B}$ is \emph{0-valid (resp. 1-valid, 2-monotone)} if every constraint in $\mathcal{B}$ is 0-valid (resp. 1-valid, 2-monotone).
\end{definition}

Khanna and Sudan have proved that if $\mathcal{B}$ is 0-valid or 1-valid or 2-monotone then \textsc{Max-CSP}$(\mathcal{B})$ is in PO; otherwise it is APX-hard (Theorem 1 in \cite{khanna1996optimization}). In fact, when $\mathcal{B}$ is 0-valid (resp. 1-valid), it suffices to assign 0 (resp. 1) to all variables; and when $\mathcal{B}$ is 2-monotone, we can reduce this problem to \textsc{Min-Cut}, which can be solved efficiently using e.g., the Edmonds-Karp algorithm. Together with Theorem~\ref{dichotomy} , we have the following result.

\begin{theorem}
Let $B=\{0,1\}$. If $\mathcal{B}$ is 0-valid or 1-valid or 2-monotone then \textsc{Max-Sur-CSP}$(\mathcal{B})$ is in PTAS; otherwise it is APX-hard.
\end{theorem}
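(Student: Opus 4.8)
The plan is to read off this Boolean-domain dichotomy as a direct instantiation of the conditional dichotomy already established in Theorem~\ref{dichotomy}, using the Khanna--Sudan classification of \textsc{Max-CSP}$(\mathcal{B})$ on $B=\{0,1\}$ as exactly the hypothesis that the conditional dichotomy requires. So the only real content is to verify that the cited classification is a dichotomy ``between PO and APX-complete'' in the precise sense needed, and then to trace each of the two cases through the appropriate transfer result.

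First I would invoke the result of Khanna and Sudan (Theorem 1 in \cite{khanna1996optimization}): on the Boolean domain, \textsc{Max-CSP}$(\mathcal{B})$ is in PO when $\mathcal{B}$ is 0-valid, 1-valid, or 2-monotone, and is APX-hard otherwise. Combined with the fact proved in Section~3 that every \textsc{Max-CSP}$(\mathcal{B})$ lies in APX, the APX-hard case is in fact APX-complete. Hence this classification is precisely a complexity dichotomy for \textsc{Max-CSP}$(\mathcal{B})$ between PO and APX-complete on $B=\{0,1\}$, so the hypothesis of Theorem~\ref{dichotomy} is satisfied.

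Next I would split into the two cases and apply the corresponding transfer. In the tractable case, when $\mathcal{B}$ is 0-valid, 1-valid, or 2-monotone, \textsc{Max-CSP}$(\mathcal{B})$ is in PO, and Proposition~\ref{PTAS} immediately yields that \textsc{Max-Sur-CSP}$(\mathcal{B})$ admits a PTAS. In the hard case, when $\mathcal{B}$ is none of these three, \textsc{Max-CSP}$(\mathcal{B})$ is APX-hard, and Proposition~\ref{APX-hardness} transfers this to the APX-hardness of \textsc{Max-Sur-CSP}$(\mathcal{B})$ (membership in APX being guaranteed by Proposition~\ref{APX}, so the problem is in fact APX-complete). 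This establishes the stated dichotomy.

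There is essentially no hard step here: the statement is a corollary obtained by feeding the known Boolean-domain \textsc{Max-CSP} dichotomy into the already-proven machinery. The only point deserving care is the bookkeeping check that the Khanna--Sudan criterion matches, case for case, the partition used by Proposition~\ref{PTAS} and Proposition~\ref{APX-hardness} --- that is, that the same syntactic condition (0-valid / 1-valid / 2-monotone) controls both the PO/APX-hard split for \textsc{Max-CSP} and the PTAS/APX-hard split for \textsc{Max-Sur-CSP}. Since Theorem~\ref{dichotomy} asserts exactly that these two splits coincide under one and the same criterion, no separate argument is required.
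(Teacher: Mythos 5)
Your proposal is correct and follows exactly the paper's own route: instantiate the conditional dichotomy (Theorem~\ref{dichotomy}, i.e.\ Propositions~\ref{PTAS} and~\ref{APX-hardness}) with the Khanna--Sudan classification of Boolean \textsc{Max-CSP}$(\mathcal{B})$. The only difference is that you spell out the two cases explicitly, which the paper leaves implicit.
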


Using a similar idea as in the proof of Theorem 1 in \cite{khanna1996optimization}, we have the following lemma.
\begin{lemma}
If $\mathcal{B}$ is 2-monotone, then \textsc{Max-Sur-CSP}$(\mathcal{B})$ is in PO.
\end{lemma}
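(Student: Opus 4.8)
The plan is to follow Khanna and Sudan's reduction of \textsc{Max-CSP}$(\mathcal{B})$ to \textsc{Min-Cut} for $2$-monotone $\mathcal{B}$, and then to add a mechanism that enforces surjectivity while preserving exact polynomial-time solvability. Since maximizing the number of satisfied constraints is the same as minimizing the number of violated ones (their sum is the fixed total $M=\sum_i|R_i^{\mathcal{A}}|$), it suffices to compute, in polynomial time, an assignment minimizing the number of violated constraints subject to using both values $0$ and $1$.

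First I would build a flow network $N$ with a source $s$ (to be read as the value $1$) and a sink $t$ (the value $0$), one vertex per variable $a\in A$, and a constant number of auxiliary vertices per constraint. An $s$--$t$ cut $(S,T)$ of finite capacity is read off as the assignment sending a variable to $1$ if its vertex lies in $S$ and to $0$ otherwise. For each constraint (a tuple of some $R_i^{\mathcal{A}}$) I would install a gadget whose cut contribution equals the indicator that the constraint is violated: a monotone clause $(x_1\wedge\dots\wedge x_p)$ is handled by an auxiliary vertex $w$ with infinite-capacity edges $w\to x_i$ and a unit \emph{violation edge} $s\to w$; the dual form $(\overline{y_1}\wedge\dots\wedge\overline{y_q})$ symmetrically, with edges $y_j\to w'$ and violation edge $w'\to t$; and the disjunctive form $(x_1\wedge\dots\wedge x_p)\vee(\overline{y_1}\wedge\dots\wedge\overline{y_q})$ by combining both auxiliary vertices with a single unit violation edge $w'\to w$, so that this edge is cut exactly when some $x_i$ is $0$ and some $y_j$ is $1$. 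A routine case analysis confirms that the minimum finite $s$--$t$ cut of $N$ equals the minimum number of violated constraints over all assignments; this part is exactly the idea of Theorem~1 in \cite{khanna1996optimization} and may be cited.

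The genuinely new step is surjectivity: the unconstrained min cut may be trivial (all variables on one side). To force a surjective assignment I would enumerate all ordered pairs $(a,a')$ of distinct variables, contract $a$ into $s$ and $a'$ into $t$ for each pair, and then compute a minimum $s$--$t$ cut in the resulting network. Every such cut yields a surjective assignment, since $a\mapsto 1$ and $a'\mapsto 0$. Conversely, any optimal surjective assignment sends $1$ to some variable $a^{*}$ and $0$ to some $a'^{*}$, and is a feasible cut after contracting the pair $(a^{*},a'^{*})$, so the pinned min cut for that pair violates no more constraints than the surjective optimum; meanwhile each pinned min cut, being surjective, violates at least as many as the optimum. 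Taking the best solution over all $O(|A|^2)$ pairs therefore returns an exact surjective optimum, and since each minimum cut is computable in polynomial time (e.g.\ by Edmonds--Karp), the whole procedure runs in polynomial time, placing \textsc{Max-Sur-CSP}$(\mathcal{B})$ in PO. The main obstacle I anticipate is not the enumeration but verifying the disjunctive-form gadget -- checking that the auxiliary vertices can always be placed so that the cut contribution is exactly the violation indicator and never an accidental infinite cut -- although this reduces to the $2$-monotone construction already established in \cite{khanna1996optimization}.
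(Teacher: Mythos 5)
Your proof is correct, and it follows the same overall architecture as the paper's proof---reduce maximizing satisfied constraints to minimizing violated ones, encode that as $s$--$t$ \textsc{Min-Cut} via the standard $2$-monotone gadgets of \cite{khanna1996optimization}, and then enumerate a polynomial number of ``pinned'' subproblems to force surjectivity---but the pinning step is genuinely different. The paper enumerates ordered pairs of \emph{constraints} $(C',C'')$, where $C'$ has a positive conjunct and $C''$ a negative one, and makes their violation edges infinite so that $C'$ is satisfied with all its $x_i$ set to $1$ and $C''$ with all its $y_j$ set to $0$; you instead enumerate ordered pairs of \emph{variables} $(a,a')$ and contract them into the source and sink. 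Your variant buys a cleaner correctness argument: an optimal surjective assignment always maps \emph{some} variable to $1$ and \emph{some} variable to $0$, so the right pair $(a^*,a'^*)$ is guaranteed to exist and to admit a finite cut, whereas the paper's scheme implicitly assumes the optimal surjective solution satisfies some constraint entirely through its positive conjunct and some constraint entirely through its negative conjunct---an assumption that can fail (e.g.\ an instance whose only constraint is $x_1\wedge x_2$ on two variables has a surjective optimum violating it, yet every admissible choice of $C'$ yields an infinite cut, and no candidate $C''$ exists at all), so the paper's procedure as written needs extra cases for ``no $C'$ selected'' and ``no $C''$ selected.'' The costs are comparable: $O(|A|^2)$ versus $O(M^2)$ max-flow computations, both polynomial. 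Your closing worry about the disjunctive gadget is unfounded---the placement of the auxiliary vertices $w,w'$ is forced only when the corresponding conjunct fails, and the case analysis you sketch already shows the cut contribution equals the violation indicator.
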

\begin{proof}
We reduce the problem of finding the maximum number of satisfiable constraints to the problem of finding the minimum number of violated constraints. This problem, in turn, can be reduced to the $s-t$ \textsc{Min-Cut} problem in directed graphs. Recall that there are three forms of 2-monotone formulas: (a) $x_1\wedge ...\wedge x_p$, (b) $\overline{y_1}\wedge...\wedge \overline{y_q}$, and (c)$ (x_1\wedge ...\wedge x_p) \vee (\overline{y_1}\wedge...\wedge \overline{y_q})$.

Construct a directed graph $G$ with two special nodes $F$ and $T$ and a different node corresponding to each variable in the input instance. Let $\infty$ denote an integer larger than the total number of constraints. We first select two constraints $C'$ and $C''$, such that $C'$ is of the form (a) or (c), and $C''$ is of the form (b) or (c). Intuitively, $C'$ is to ensure that there exists some element assigned to 1, and $C''$ is to ensure that there exists some element assigned to 0. For a fixed pair $C'$ and $C''$, we proceed as follows for each of the three forms of constraints :
\begin{itemize}
\item For a constraint $C$ of the form (a), create a new node $e_C$ and add an edge from each $x_i$ to $e_C$ of cost $\infty$. Add also an edge from $e_C$ to $T$. This edge is of cost $\infty$ if $C$ is selected as $C'$, and of unit cost otherwise.
\item For a constraint $C$ of the form (b), create a new node $\overline{e_C}$ and add an edge from $\overline{e_C}$ to each $y_i$ of cost $\infty$. Add also an edge from $F$ to $\overline{e_C}$. This edge is of cost $\infty$ if $C$ is selected as $C''$, and of unit cost otherwise.
\item Finally, for a constraint $C$ of the form (c), we create two nodes $e_C$ and $\overline{e_C}$ and connect $e_C$ to each $x_i$ and connect $\overline{e_C}$ to each $y_i$ as described above. If $C$ is selected both as $C'$ and $C''$, then add an edge from $e_C$ to $T$ of cost $\infty$ and an edge from $F$ to $\overline{e_C}$ of cost $\infty$; otherwise if $C$ is only selected as $C'$, then add an edge from $e_C$ to $T$ of cost $\infty$; otherwise if $C$ is only selected as $C''$, then add an edge from $F$ to $\overline{e_C}$ of cost $\infty$; otherwise add an edge from $e_C$ to $\overline{e_C}$ of unit cost.
\end{itemize}
From the correspondence between cuts and assignments, by setting variables on the $T$ side of the cut to be 1 and variables on the $F$ side of the cut to be 0, we find that the cost of a minimum cut separating $T$ from $F$, equals the minimum number of constraints that are violated, under the condition that $C'$ and $C''$ are selected to ensure the surjectivity. We only need to go over all possible combinations of $C'$ and $C''$ to achieve the maximum surjective solution.
\end{proof}

We close this section by providing certain \textsc{Max-Sur-CSP}$(\mathcal{B})$ problems which belong to a complexity class different from that of any \textsc{Max-CSP}$(\mathcal{B})$ problem on the Boolean domain or 3-element domains, under the assumption that P$\neq$NP.

\begin{lemma}
There exist finite relational structures $\mathcal{B}$ such that \textsc{Max-Sur-CSP}$(\mathcal{B})$ is NP-hard but in PTAS.
\end{lemma}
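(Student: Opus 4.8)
The plan is to exhibit an explicit witness rather than to build a new structure from scratch, and the natural candidate is the structure $C_4^{\text{ref}}$ on $B=\{0,1,2,3\}$ introduced earlier. The two required properties are already essentially in hand. On the approximability side, we have already seen that \textsc{Max-Sur-CSP}$(C_4^{\text{ref}})$ is in PTAS: the value $1$ satisfies every constraint (since $1-1\equiv 0 \bmod 4$ gives $(1,1)\in R^{\mathcal{B}}$), so \textsc{Max-CSP}$(C_4^{\text{ref}})$ is trivially in PO, and Proposition~\ref{PTAS} then yields the PTAS. On the hardness side, \textsc{Sur-CSP}$(C_4^{\text{ref}})$ is known to be NP-complete \cite{paulusma2011computational}. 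The entire task is therefore to transfer NP-completeness of this feasibility decision problem into NP-hardness of the optimization problem.

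To carry this out I would first note that whenever $|A|\geq|B|$ a surjective map exists, so the \textsc{Max-Sur-CSP} instance is always feasible and its optimum $Opt^*$ is well defined, with $Opt^*\leq\sum_i|R_i^{\mathcal{A}}|$. The key observation is that $Opt^*=\sum_i|R_i^{\mathcal{A}}|$ if and only if $\mathcal{A}$ is a yes-instance of \textsc{Sur-CSP}$(C_4^{\text{ref}})$, since both conditions assert exactly that some surjective homomorphism satisfies every constraint. Consequently, any exact polynomial-time algorithm for \textsc{Max-Sur-CSP}$(C_4^{\text{ref}})$ would decide \textsc{Sur-CSP}$(C_4^{\text{ref}})$ by computing $Opt^*$ and comparing it with $\sum_i|R_i^{\mathcal{A}}|$ (instances with $|A|<|B|$ being trivial no-instances, discarded first). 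As \textsc{Sur-CSP}$(C_4^{\text{ref}})$ is NP-complete, \textsc{Max-Sur-CSP}$(C_4^{\text{ref}})$ is NP-hard, and $\mathcal{B}=C_4^{\text{ref}}$ witnesses the lemma.

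I do not expect a genuine obstacle: the whole content is the juxtaposition of two facts already available, namely that surjective feasibility for $C_4^{\text{ref}}$ is NP-complete while its unrestricted \textsc{Max-CSP} is trivial. The only point deserving care is the well-definedness of the reduction, i.e.\ ensuring feasibility of the produced \textsc{Max-Sur-CSP} instances; this is handled by the $|A|\geq|B|$ convention adopted in Section~2 together with the remark that smaller instances are decided immediately. Note that any structure whose surjective CSP is NP-hard but whose non-surjective \textsc{Max-CSP} lies in PO would serve equally well, so the lemma is really a statement about the gap between these two complexity landmarks; $C_4^{\text{ref}}$ is simply the most convenient concrete instance already at our disposal.
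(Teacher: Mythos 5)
Your proof is correct and follows essentially the same two-step strategy as the paper: NP-hardness is inherited from the surjective decision problem via the trivial reduction, and membership in PTAS follows from \textsc{Max-CSP}$(\mathcal{B})$ being in PO together with Proposition~\ref{PTAS}. The only difference is the witness: the paper uses a $0$-valid Boolean structure (invoking the Schaefer-style classification of Creignou and H\'ebrard to get NP-hardness of the surjective decision problem), whereas you use $C_4^{\text{ref}}$ -- a choice the paper itself endorses in the remark immediately following the lemma, and which arguably needs less verification since the NP-completeness of \textsc{Sur-CSP}$(C_4^{\text{ref}})$ is cited directly; the paper's Boolean example has the added value of placing the phenomenon on the two-element domain, which is the focus of that subsection.
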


\begin{proof}
Let $\mathcal{B}$ be the relational structure on the Boolean domain with the following relation

$$R^\mathcal{B}=\{(0,0,0),(0,1,1),(1,0,1),(1,1,0),(0,1,0)\}.$$

This relation is not \emph{weakly positive}, not \emph{weakly negative}, not \emph{affine}, and not \emph{bijunctive}\footnote{Their definitions can be found in \cite{CreignouHerbrard}.}, so \textsc{Sur-CSP}($\mathcal{B}$) is NP-hard (which follows from Theorem 4.10 in \cite{CreignouHerbrard}). Since there is a trivial reduction from \textsc{Sur-CSP}($\mathcal{B}$) to \textsc{Max-Sur-CSP}($\mathcal{B}$), we have the NP-hardness of \textsc{Max-Sur-CSP}($\mathcal{B}$). On the other hand, \textsc{Max-CSP}($\mathcal{B}$) is in PO, since we can assign 0 to every variable to satisfy all constraints. So \textsc{Max-Sur-CSP}($\mathcal{B}$) is in PTAS by Proposition~\ref{PTAS}.
\end{proof}

\begin{remark}
In fact, such $\mathcal{B}$ on the Boolean domain is not unique. Any 0-valid or 1-valid or 2-monotone relational structure which is not weakly positive, not weakly negative, not affine, and not bijunctive satisfies the desired property.
There also exist relational structures on larger domains with this property, such as $C_4^{\text{\upshape{ref}}}$ in Example 3.
\end{remark}

\subsection{On 3-element domains}
Jonsson, Klasson, and Krokhin has proved the complexity dichotomy for \textsc{Max-CSP}$(\mathcal{B})$ between PO and APX-hard on 3-element domains (Theorem 3.1 in \cite{jonsson2006approximability}). Together with Theorem~\ref{dichotomy}, we have the following result.

\begin{theorem}
Let $\mathcal{B}$ be any finite relational structure on a 3-element domain. \textsc{Max-Sur-CSP}$(\mathcal{B})$ is either in PTAS or APX-hard.
\end{theorem}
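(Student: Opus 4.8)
The plan is to obtain this statement as an immediate corollary of Theorem~\ref{dichotomy}, feeding it the already-established dichotomy for \textsc{Max-CSP} on 3-element domains. The single prerequisite is the result of Jonsson, Klasson, and Krokhin (Theorem 3.1 in \cite{jonsson2006approximability}), which asserts that for every finite relational structure $\mathcal{B}$ on a 3-element domain, \textsc{Max-CSP}$(\mathcal{B})$ is either in PO or APX-hard. Since \textsc{Max-CSP}$(\mathcal{B})$ is always in APX (the first proposition of Section 3), being APX-hard is here the same as being APX-complete, so this is exactly a PO/APX-complete dichotomy -- precisely the hypothesis required by Theorem~\ref{dichotomy}.

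First I would record this hypothesis check, and then simply invoke Theorem~\ref{dichotomy}, whose conclusion is a PTAS/APX-complete dichotomy for \textsc{Max-Sur-CSP}$(\mathcal{B})$ over the same class of structures. For completeness one can unpack the two branches: if \textsc{Max-CSP}$(\mathcal{B})$ lies in PO, then Proposition~\ref{PTAS} places \textsc{Max-Sur-CSP}$(\mathcal{B})$ in PTAS; if \textsc{Max-CSP}$(\mathcal{B})$ is APX-hard, then Proposition~\ref{APX-hardness} makes \textsc{Max-Sur-CSP}$(\mathcal{B})$ APX-hard, and since Proposition~\ref{APX} guarantees membership in APX, it is in fact APX-complete. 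As no third case arises from the cited dichotomy, every such \textsc{Max-Sur-CSP}$(\mathcal{B})$ falls into one of the two classes, which is the claim.

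There is essentially no independent obstacle in this statement: all of the genuine difficulty has been discharged upstream, in the transfer theorem (Theorem~\ref{approximability} and the derived Proposition~\ref{PTAS}) and in the external \textsc{Max-CSP} dichotomy. The only point deserving a moment's care is matching the format of the cited result to the hypothesis of Theorem~\ref{dichotomy}: I must confirm that the cited dichotomy truly separates PO from APX-complete, rather than merely from ``not in PO'', which follows from the APX-membership of \textsc{Max-CSP}$(\mathcal{B})$; and that the PTAS and APX-complete branches remain genuinely distinct, which holds under $P\neq NP$ exactly as recorded after Proposition~\ref{PTAS}.
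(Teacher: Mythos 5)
Your proposal is correct and follows exactly the paper's own route: the paper derives this theorem by combining the Jonsson--Klasson--Krokhin dichotomy for \textsc{Max-CSP} on 3-element domains with Theorem~\ref{dichotomy}, which is precisely what you do. Your extra remark that ``APX-hard'' upgrades to ``APX-complete'' via APX-membership is a small point the paper leaves implicit, but it does not change the argument.
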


\begin{remark}
The detailed statement of this complexity dichotomy needs concepts of cores and supermodularity, see, e.g., \cite{jonsson2006approximability}.
\end{remark}

\section{Further research}
For \textsc{Max-CSP}$(\mathcal{B})$, define its \emph{approximation threshold} $r_0$ to be the supremum of the set of polynomial-time approximation ratios of \textsc{Max-CSP}$(\mathcal{B})$.
For any $r>r_0$, \textsc{Max-Sur-CSP}($\mathcal{B}$) is not polynomial-time $r$-approximable (by Proposition~\ref{inapproximability}); and for any $r<r_0$, \textsc{Max-Sur-CSP}($\mathcal{B}$) is polynomial-time $r$-approximable (by Proposition~\ref{approximability}). What is the complexity of \textsc{Max-Sur-CSP}($\mathcal{B}$) when the desired approximation ratio is $r_0$?

\begin{open}
Let $r_0$ be the approximation threshold of \textsc{Max-CSP}$(\mathcal{B})$. Is \textsc{Max-Sur-CSP}$(\mathcal{B})$ polynomial-time $r_0$-approximable?
\end{open}

Consider a special case of the above problem, where $r_0$ is 1, which means that \textsc{Max-CSP}$(\mathcal{B})$ is in PO.

\begin{open}
Given that \textsc{Max-CSP}$(\mathcal{B})$ is in PO. Is \textsc{Max-Sur-CSP}$(\mathcal{B})$ in PO?
\end{open}

In fact, we failed to give an answer, even for some concrete finite structures $\mathcal{B}$ on small domains:

\begin{open}
Are the Minimum-Asymmetric-Cut problem and the Minimum-Rainbow-Coloring problem in PO?
\end{open}

The difficulty of the above problems lies in the classification of the \textsc{Max-Sur-CSP}$(\mathcal{B})$ problems which are in PTAS.
\begin{open}
If P$\neq$NP, is there a complexity trichotomy for \textsc{Max-Sur-CSP}$(\mathcal{B})$ among PO, PTAS but NP-hard, and APX-hard, for any finite structure $\mathcal{B}$ on the Boolean domain or 3-element domains?
\end{open}

\bibliographystyle{plain}
\bibliography{references}

\end{document}